\def\sp{\hskip -5pt} 
\def\spa{\hskip -3pt}
\def\cH{{\ca H}}
\def\cD{{\ca D}}
\def\cO{{\ca O}}
\def\cA{{\ca A}}
\def\bR{{\mathbb R}}
\def\beq{\begin{eqnarray}}
\def\eeq{\end{eqnarray}}
\def\pa{\partial}
\def\at{\left(}               %%  open (
\def\ct{\right)}              %%  close )
\newcommand{\ca}[1]{{\cal #1}}         %%  calligraphic
\def\de{\delta}
\def\ka{\kappa}
\def\la{\lambda}
\def\si{\sigma}
\def\om{\omega}
\def\={\stackrel {\mbox{\scriptsize  def}} {=}} 
\newcounter{remark}[section]
\def\theremark{\thesection.\arabic{remark}}
\def\s #1 {\section{#1}}
\def\ssa #1 {\ifhmode{\par}\fi\refstepcounter{subsection}
  \noindent {\bf\thesubsection}. {\em #1}.\quad
  \addcontentsline{toc}{subsection}{\protect\numberline{\thesubsection} #1}%
  }
\def\ssb #1 {\ifhmode{\par}\fi\refstepcounter{subsection}
  \noindent {\bf\thesubsection.} {\bf #1.}\quad
  \addcontentsline{toc}{subsection}{\protect\numberline{\thesubsection} #1}%
  }
\def\remark {\ifhmode{\par}\fi\refstepcounter{remark}
  \noindent {\bf Remark \theremark}. \quad}
\newtheorem{teorema}{Theorem}[section]
\newtheorem{proposizione}{Proposition}[section]
\newtheorem{definizione}{Definition}[section]
\begin{document} 
 
\hfill{\sl Preprint UTM 742 revised version - March 2011} 
\par 
\bigskip 
\par 
\rm 
 
%%%%%%%%%%%%%   Title %%%%%%%%%%%%%%%%%%%%%%%%%% 
 
\par 
\bigskip 
\LARGE 
\noindent 
{\bf State independence for tunnelling processes through black hole horizons and Hawking radiation} 
\bigskip 
\par 
\rm 
\normalsize 
 
%%%%%%%%%%%%%%%%%%%%%%%%%%%%%%%%%%%%%%%%%%%%%
%%%%%%%%%%%% Authors %%%%%%%%%%%%%%%%%%%%%%%%

\large
\noindent 
{\bf Valter Moretti$^{1,a}$} and  {\bf Nicola Pinamonti$^{2,b}$} \\
\par
\small
\noindent$^1$ Dipartimento di Matematica, Universit\`a di Trento
 and  Istituto Nazionale di Fisica Nucleare -- Gruppo Collegato di Trento, via Sommarive 14  
I-38050 Povo (TN), Italy. \smallskip

\noindent $^2$ Dipartimento di Matematica, Universit\`a di Roma``Tor Vergata", Via della Ricerca Scientifica 1, I-00133 Roma - Italy and
 Dipartimento di Matematica, Universit\`a di Genova, Via Dodecaneso 35, I-16146 Genova - Italy. \smallskip
\smallskip

\noindent E-mail: 
 $^a$moretti@science.unitn.it,  $^b$pinamont@dima.unige.it\\ 
 \normalsize

\par 
 
\rm\normalsize 

%\linespread{1.5} 
\rm\normalsize 

%\linespread{1.5} 
\rm\normalsize 
 
%%%%%%%%%%%% Date %%%%%%%%%%%%%%%%%%%%%%%%%% 
 
\par 
\bigskip 

\noindent 
\small 
{\bf Abstract}.  Tunnelling processes through black hole horizons have recently been investigated in the framework of WKB theory, discovering an interesting interplay  with the Hawking radiation. 
In this paper, we instead adopt the point of view proper of QFT in curved spacetime, namely,
we  use a suitable scaling limit towards a Killing horizon to 
 obtain the leading order of the correlation function relevant for the tunnelling.
 % process through a Killing horizon.  
The computation is done  for certain large class of reference quantum states
  for scalar fields, including Hadamard states.
In the limit of sharp localization either on the external side or on opposite sides of the horizon, 
the quantum correlation functions appear to have thermal nature. In both cases the characteristic temperature is referred to the surface gravity associated with the Killing field and thus connected with the Hawking one. 
Our approach is valid for every stationary charged rotating 
non-extremal black hole. However, since 
the computation is completely local, it
covers the case of a Killing horizon which just temporarily exists in
some finite region, too.
 These results provide strong support to the idea that the Hawking radiation, 
 which is detected at future null infinity and needs some 
 global structures to be defined, 
 is actually related to a 
 local phenomenon taking place even for local geometric structures (local Killing horizons), existing just for a while.  
\normalsize
\bigskip

%\newpage
\s{Introduction }
As is known, the Hawking radiation \cite{Hawking} is detected at {\em future null infinity} of a spacetime containing 
collapsing matter giving rise to a black hole. At least in the case of spherical symmetry, the existence and the features of that radiation are quite independent 
from the details of the collapse. However, the
type of the short-distance behaviour  of the two-point function of the reference state, employed to describe the modes of the radiation, plays a relevant role \cite{FH}.
In recent years, attention has been focused on {\em local} properties of 
%states 
models 
where the Hawking radiation is manifest. Here {\em local} means
 in a neighborhood of a point on the event horizon \cite{PW,fisici,APS,APGS}. In this second approach the radiation appears to be related with some thermal effects 
associated to some tunnelling process through the horizon. In particular, the tunneling probability, computed in the framework of semiclassical WKB approach, 
 has the characteristic thermal form $e^{- E/T_H}$ (the Boltzmann constant 
 being re-defined as $k=1$) where $T_H$ is the Hawking temperature and $E$ is the energy of the particle crossing the horizon. More precisely, that exponential thermal factor arises when taking a limit towards the horizon for an endpoint 
of the path of the classical particle.

This approach is interesting because it deals with local aspects only and, in this sense, it seems to be more general than the standard one. 
In fact,  it may be applied to pictures where a certain geometric structure,  interpreted as the horizon, exists ``just for a while'', without 
extending into a true global structure up to the  future null infinity where, traditionally, the Hawking radiation is detected. 
In \cite{fisici2,fisici3}, even the case of a spherically symmetric black hole {\em in formation} was analysed, where no proper horizon structure exists, 
being replaced by  a {\em dynamical horizon}.
Other interesting results, also considering the backreaction, can be found in \cite{mann,vagenas}.

Within these new remarkable approaches it is however difficult to 
 understand how the found properties are independent form the state of the quantum system. This is
  essentially due to the fact that they are discussed at the {\em quantum-mechanical} level rather than the {\em quantum-field-theory} level.
Indeed, in \cite{PW,fisici} it is assumed that there is some preferred notion of quantum particle whose wavefunction satisfies the Klein-Gordon equation. That equation is actually treated as a Schr\"odinger equation when dealing with transition probabilities within the WKB framework. 
However, in curved spacetime 
 there is no natural definition of quantum particle, unless adopting the  quantum-field-theory framework, fixing a preferred quasifree
  reference state and building up the associated Fock-Hilbert space. It does not seem that 
  this relevant issue is tackled in the mentioned literature.
 Furthermore, the procedure 
exploited in \cite{fisici,fisici2,fisici3} needs a  Feynman prescription to make harmless a divergence that pops up when performing the above-mentioned limit
 towards the horizon. As a matter of fact, that procedure turns the real-axis computation into a 
complex-plane computation and the very imaginary part of the WKB amplitude, arising that way, leads to the wanted factor $e^{- E/T_H}$.  The reason for the appearance of an imaginary part in the semi-classical action
 has been clarified in \cite{APS,APGS}, through a very careful analysis of the integration paths and the adopted coordinate system, without requiring any regularization procedure and remaining in the general WKB framework, while adopting the path integral viewpoint. In any case, that clarification does not solve 
 the problem of the absence of a precise notion of a quantum particle in curved spacetime, necessary to exploit the WKB formalism 
 at quantum-mechanical level. 
 It seems however plausible that the appearance of an imaginary part in the semi-classical action, formally equivalent to the Feynman regularization procedure, is nothing but a remnant of the  choice of a preferred reference state at quantum-field-theory level, whose two-point function
has a short-distance divergence close to that of Minkowski vacuum.

While sticking to the local aspects associated with 
states showing the Hawking radiation, differently from the references quoted above, in this paper we shall deal with
 a definite framework at the quantum-field-theory level. More precisely, we shall focus on the two-point correlation function 
$\omega(\Phi(x)\Phi(y))$ of a quantum field $\Phi$ settled in a (not necessarily {\em quasifree}) state $\omega$
 whose short-distance divergence is, essentially,  of {\em Hadamard type}, thus generalising the 
 short-distance behaviour of Minkowski vacuum. This is one of the  hypotheses exploited in \cite{FH,KW}. It actually 
encompasses a huge class of states, those that are supposed to have a clear physical meaning \cite{Wald2}, especially in relation with the problem 
of the renormalisation of the stress-energy tensor and the computation of the quantum backreaction on the metric.\\
The two-point function $\omega(\Phi(x)\Phi(y))$ corresponds, up to normalization,  to a probability amplitude. 
In this sense,
 it measures the tunnelling probability through the horizon when $x$ and $y$ are kept at the opposite sides of the horizon.  In particular, if the state $\omega$ is  quasifree, that probability amplitude 
can be interpreted as a quantum-mechanical  probability amplitude,  as wished in the above-mentioned literature, just referring to the 
natural notion of particle associated with the state $\omega$.

 {From} the geometric viewpoint, we shall assume to work in a sufficiently small neighbourhood $\cO$ of a local {\em Killing horizon} structure $\cH$, also supposing  that the
 {\em surface gravity} $\kappa$ is {\em nonvanishing} and {\em constant} along the horizon. We stress that the structure could be either part of
 the future Killing event horizon of a stationary black hole in the full Kerr-Newman family also obtained by matter collapse, or it could be completely local and ceasing to exist in the future of $\cO$
 in view of the general dynamics of the matter and the fields in the considered spacetime.
The requirement that the surface gravity is constant on the local horizon means that, at least locally, a thermal equilibrium has been reached, 
since a constant surface gravity corresponds to the validity of the zero-law of black-hole thermodynamics.

The existence of a timelike Killing vector
$K$ defining $\cH$ provides the  preferred notions of time and energy we intend to consider. Notice that in \cite{fisici2,fisici3}, dealing with spherically-symmetric black holes in formation,
 the notion energy was referred to the so called {\em Kodama-Hayward} vector field that, in those backgrounds, generalises the notion of Killing field.

Exploiting  general technical achievements about Killing horizons established in \cite{KW} and \cite{RW1},
 we shall prove that, independently 
from the choice of the quantum state in above-mentioned class, when the supports of the test functions centred on the two arguments
 $x,y$ of $\omega(\Phi(x)\Phi(y))$ become closer and closer to the horizon, the two-point function acquires a thermal spectrum with respect to the notion of time and energy associated with the Killing field. More precisely, if both arguments stay on the same side of the horizon, the Fourier transform of the two-point function presents the very
 Bose-Einstein form driven by the Hawking temperature (see however the remark at the end of the paper).
Conversely, whenever the two arguments are kept at the opposite sides of the horizon, the resulting spectrum is different. It is however in agreement 
with the transition probability between two weakly coupled reservoirs which are in thermal equilibrium at Hawking temperature.
Actually, as in the case of the Boltzmann distribution, its spectrum decays exponentially for  high energies.
%with Boltzmann's distribution  at the Hawking temperature for  high energies. 
In both cases, in order to catch the leading contribution to the two-point function, we shall exploit a suitable scaling limit procedure \cite{HNS,Buchholz,BV} towards
 the horizon. Operating in this way, the local thermal nature of the correlation functions  becomes manifest as a state-independent feature when the states belong to the above-mentioned wide class. 

The paper is organized as follows. In the next section, 
recalling some technical results established in 
\cite{KW} and \cite{RW1},
we shall present the geometric hypotheses we shall use.
We assume that the reader is familiar with 
the standard notions of differential geometry of spacetimes \cite{Wald}. 
%(in the remaining part of the work ``submanifold'' means smooth embedded submanifold). 
 In the subsequent section we shall compute the two-point function $\omega(\Phi(x)\Phi(y))$ and its limit approaching the horizon. The last section will present a summary and some general remarks. 

\s{Spacetime Geometry}
\ssb{Local geometry} We start our discussion fixing the basic geometric setup that we employ in this paper. 
We henceforth consider a $4$-dimensional (smooth) time-oriented spacetime $(M,g)$. 
Furthermore, we  assume the validity of the following local geometric properties, which are the same as in \cite{RW1}.

\begin{definizione}\label{def1}
Let ${\cal O}$ be an open subset of $M$, the 
{\bf local general geometric hypotheses} hold in ${\cal O}$ if a smooth vector field 
$K$ exists thereon such that:
\begin{itemize}
  \setlength{\itemsep}{1pt}
  \setlength{\parskip}{0pt}
  \setlength{\parsep}{0pt}
\item[{\bf (a)}] $K$ is a Killing field for $g$ in $\cO$.

\item[{\bf (b)}] $\cO$ contains a connected  $3$-submanifold ${\cal H}$, the {\bf local Killing horizon},
 that is invariant under the group of local isometries generated by $K$ and  $K^aK_a =0$ on ${\cal H}$.

\item[{\bf (c)}] The orbits of $K$ in ${\cal O}$ are diffeomorphic to an open interval contained in $\bR$ and $\cal H$ admits
 a smooth $2$-dimensional cross section which intersects
 each orbit of $K$ exactly once.

\item[{\bf (d)}]  The {\bf surface gravity} -- {\em i.e.} the function $\kappa : \cH \to \bR$ such that, in view of (a) and (b)  $\nabla^a (K_bK^b)  = -2\kappa K^a$ -- turns out to be strictly positive\footnote{What actually matters is $\kappa \neq 0$, since $\kappa>0$ can always be  obtained in that case by re-defining 
$K \to -K$.} and constant on $\cH$.

\end{itemize}

\end{definizione}
%}\\
 
 \noindent As we said above, the local Killing horizon $\cH$ may represent a horizon which exists ``just for a while'', without 
extending into a true global structure which reaches the  future null infinity.
However, our hypotheses are, in particular, valid  \cite{RW1} in a neighbourhood of any point on a black hole future horizon,
  once that, after the collapse, the metric has settled down to its stationary, not necessarily static,  form 
of any non-extreme black hole in the charged Kerr-Newman family. In particular, those hypotheses and our results are valid for the Kerr black hole.
 There $K$ is the Killing vector defining the natural notion of time in the external region of the black hole and 
 ${\cal H}$ is part of the event horizon.\\
With the hypotheses (a) and (b), the integral lines of $K$ along ${\cal H}$ 
can be re-parametrized to segments of null geodesics and  $\nabla^a (K_bK^b)  = -2\kappa K^a$ holds
 on $\cal H$
where the {\bf surface gravity}, $\kappa : {\cal H} \to \bR$, is constant along each fixed geodesic \cite{Wald}. 
The requirement (d) is not as strong as it may seem at first glance. Indeed, it is possible to prove that, whenever a spacetime admitting a Killing horizon satisfies 
Einstein equations and the dominant energy condition is verified,  the surface gravity must be constant on the horizon \cite{Wald}. (However, \cite{Wald}
 a result originally obtained by Carter  states that -- independently of any field equations -- the 
surface gravity of a Killing horizon must be constant if (i) the 
horizon Killing field is static {\em or} (ii) there is an additional 
Killing field and the two Killing fields are 2-surface orthogonal.)
At least in the case $\kappa > 0$, this is nothing but the
 {\em zero-th law of black hole thermodynamics} where $\frac{\kappa}{2\pi}$ amounts to
the {\em Hawking temperature} of the black hole.\\

\ssb{Killing and Bifurcate Killing horizons}
We  now focus on the relation of the previously introduced local geometric hypotheses and the more rigid case of a bifurcate Killing horizon. 
A Killing field $K$ determines a {\bf bifurcate Killing horizon} \cite{Boyer}
when it vanishes on a connected $2$-dimensional acausal space-like submanifold ${\cal B} \subset M$,
 called the {\bf bifurcation surface}, and $K$ is light-like on the two $K$-invariant $3$-dimensional
 null submanifolds ${\cal H}_+, {\cal H}_- \subset M$ 
generated by the pairs of null geodesic orthogonally emanated from ${\cal B}$. 
 In particular ${\cal H}_+ \cap  {\cal H}_-= {\cal B}$ and the null geodesics forming ${\cal H}_+ \cup  {\cal H}_-$ are re-parametrised integral lines of $K$ on 
$({\cal H}_{+}\cup {\cal H}_-)\setminus {\cal B}$.
By definition, on ${\cal H}_+$ the field $K$ in the future of ${\cal B}$ is outward pointing.  
The simplest example of a bifurcate Killing horizon is that realized by 
the Lorentz boost $K$ in Minkowski spacetime. 
Other, more interesting, cases are the bifurcate Killing horizons of maximally extended
black hole geometries like the Kruskal extension of the Schwarzschild
even including the non-extreme charged rotating case.

For our purposes it is important to notice that, 
in the case of a bifurcate Killing horizon, any neighbourhood ${\cal O}$ of a point on ${\cal H}_+$, which has empty intersection with the bifurcation surface ${\cal B}$, satisfies the 
 local general hypotheses stated in definition \ref{def1}.
It is very remarkable for physical applications and for our subsequent discussion in particular, that such a result can be partially reversed 
as established by Racz and Wald \cite{RW1,RW2}. 
Indeed, if the local general geometric hypotheses are fulfilled  on $\cal H$,
 for a sufficiently small ${\cal O}$,  the spacetime outside ${\cal O}$ can be smoothly deformed preserving the geometry inside ${\cal O}$
 and  extending $K$ and $\cal H$ to a whole bifurcate Killing horizon in the deformed spacetime.
Thus, when studying local properties, the bifurcation surface $\cal B$ can be ``added'' also to those 
spacetimes without bifurcation surface, as is the case of black holes formed by stellar collapse.
One can therefore take advantage of the various technical properties of the bifurcate Killing horizon  as we shall do in the rest of the paper.\\

\begin{figure}
\centering
     \begin{picture}(200,150)(0,0) %(54,90)(0,0)
       \put(0,0){\includegraphics[height=6.5cm]{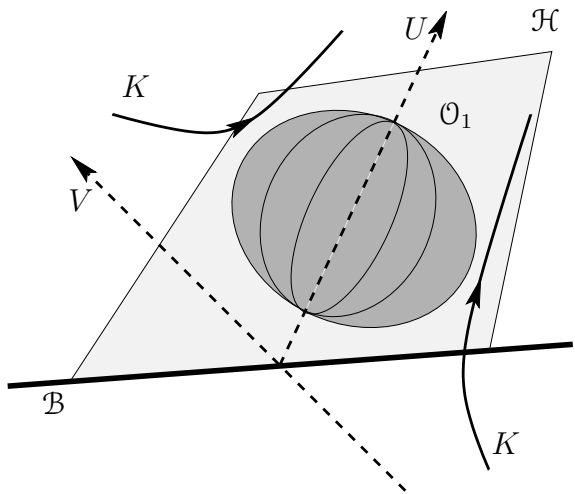}}
       \put(152,172){\large $U$}
       \put(25,107){\large $V$}
       \put(165,140){\large ${\cal O}_1$}
       \put(15,30){\large ${\cal B}$}    
      \put(200,175){\large ${\cal H}$}
      \put(45,150){\large $K$}
       \put(185,15){\large $K$}
 \end{picture}
     \caption{The thick black line is part of the bifurcation surface. 
     The region painted in light grey corresponds to the horizon $\cH$, while the dark-grey region  represents the open set $\cO_1$.}
\end{figure}

\ssb{Killing vector, geodesical distance in $\cO$ and coordinates $V,U,x^3,x^4$} \label{seccoord}
Let us focus on a relevant coordinate patch \cite{KW} defined in a neighbourhood of  ${\cal H}_+$  for a
  bifurcate Killing horizon generated by a Killing vector field $K$ (similar adapted coordinates exist for $\cH_-$). 
Let $U$ denote an affine parameter along the null geodesics forming ${\cal H}_+$ fixing  the origin at the bifurcation surface ${\cal B}$. 
Each point $p\in {\cal H}_+$ is thus determined  by a corresponding pair $(U, s)$, where $s\in {\cal B}$ denotes the point which is
intersected by the null geodesic generator through $p$.

We shall now extend those coordinate system to a neighbourhood of ${\cal H}_+$.
To this end, for each point $q\in {\cal B}$, let us indicate by $n$ the
unique future-pointing null vector which is orthogonal to ${\cal B}$ and has inner product $-1/2$ with $\frac{\partial}{\partial U}$.
We extend $n$ on all of ${\cal H}_+$ by parallel transport along the null generators of ${\cal H}_+$. Let $V$ denote the affine
parameter along the null geodesics determined by $n$, with V= 0 on ${\cal B}$. 
It is clear that $(V, U, s)$ determines a
point in a sufficiently small neighbourhood  of ${\cal H}_+$.  
We are thus in a position to introduce the sought coordinate patch.  
If  $(x^3,x^4)$ denote coordinates defined on a open neighbourhood (in ${\cal B}$) of a point in ${\cal B}$, a coordinate patch 
$(V,U,x^3,x^4)$, we call {\bf adapted to $\cH_+$}, turns out to be defined in corresponding open neighbourhoods %(in $M$) 
of points on 
${\cal H}_+$. 
In these coordinates: 
\beq\label{metric}
g\spa\rest_{{\cal H}_+} = - \frac{1}{2}dU \otimes dV -  \frac{1}{2}dV \otimes dU + \sum_{i,j=3}^4 h_{ij}(x^3, x^4) dx^i\otimes  dx^j \:, \label{met}
\eeq
where metric $h$ is that induced by $g$ on $\cal B$.  $h$ is thus {\em positive definite} and it does not depend on $V,U$.
We stress once more that,  in view of Racz-Wald's result, such a coordinate system always exits in $\cO$, provided the local geometric hypotheses hold in $\cO$, regardless the actual geometry of the spacetime outside $\cO$.

In the rest of the paper, referring to this geometric structure in ${\cal O}$, we shall employ the following notation. 
We shall indicate by ${\cal S}_{V,U}$ the cross section of ${\cal O}$  at $V,U$ constant. Moreover, 
$s(p) \in \cal B$ will be the point with coordinates $(x^3,x^4)$ when 
$p \in \cO$ has coordinates $(V,U,x^3,x^4)$ 
(similarly $(V',U',x'^3,x'^4)$ will denote the analogous set of coordinates for $p'\in {\cal O}$).
The set  $G_\delta(p,V,U) \subset {\cal S}_{V,U}$ is defined as the set whose image $s(G_\delta(p,V,U)) \subset {\cal B}$  coincides with the open $h$-geodesical ball centred on 
$s(p)$ with radius $\delta$.

 We shall denote by $\sigma(p,p')$ the squared $g$-geodesic distance,
 taken with its Lorentzian sign,
% between points $p,p'\in \cO$ (taken with Lorentzian its sign) 
between any couple of points $p,p'$
contained in some $g$-geodesically convex neighbourhood. Finally, we denote by $\ell(s,s')$ 
 the squared $h$-geodesic distance between points $s,s'$ in some $h$-geodesically convex neighbourhood contained in ${\cal B}$.
 
Making use of the preceding definitions and the introduced notations, we are going to
present the following useful proposition, which is also based on achievements in \cite{KW}.

\begin{proposizione}\label{propin} Let ${\cal O}\subset M$ be a set on which the local general geometric hypotheses hold, and let   ${\cal O}_1$ be  another open set such that $\overline{\cal O}_1\subset {\cal O}$ is compact.
Define a coordinate frame adapted to $\cH_+$  on $\cO$ (as said at the beginning of section \ref{seccoord})
 so that $p\in \cO$ has coordinates $(V,U,x^3,x^4)$  with $p \in \cH$ iff $V=0$. The following holds.
\begin{itemize}
  \setlength{\itemsep}{1pt}
  \setlength{\parskip}{0pt}
  \setlength{\parsep}{0pt}

\item[{\bf (a)}] In ${\cal O}$, the decomposition
$K =  K^1 \frac{\partial}{\partial V} + K^2 \frac{\partial}{\partial U} +   K^3 \frac{\partial}{\partial x^3}+
 K^4 \frac{\partial}{\partial x^4}$ is valid and, if $p \in {\cal O}_1$:
\beq\label{K}
K^1(p) = - \kappa V + V^2 R_1(p)\:,  \quad K^2(p) =  \kappa U + V^2 R_2(p)\:, \quad  
K^i(p) =   V R_i(p)\:,\: i=3,4\:, \label{KV}
\eeq 
where $R_1$, $R_2$, $R_i$ are bounded smooth functions defined on ${\cal O}_1$.

\item[{\bf (b)}] If ${\cal O}_1$  is  included in a $g$-geodesically convex neighbourhood and if $p,p' \in \cH \cap {\cal O}_1$, then 
$\sigma(p,p') = \ell(s(p),s(p'))$. 

\item[{\bf (c)}] If ${\cal O}_1$  is as in (b), there exists a $\delta>0$ such that,
 for every fixed $p\in {\cal O}_1$, the smooth map $G_\delta(p,V',U') \ni p' \mapsto \sigma(p,p')$ 
has vanishing gradient (with respect to the coordinates of $s(p')$) in a unique point $q(p,V',U')$ attaining its minimum there. In particular 
$s(q(p,V',U'))= s(p)$ if $p\in \cH$.

\item[{\bf (d)}] For $p$ and $q=q(p,V',U')$ as in (c):
\beq\label{sigma}
\sigma(p,q) =  \ell(s(p),s(q)) - (U-U')(V-V')  + R(p,V',U')\label{expansions}\:,
\eeq 
where 
  $R(p,V',U') = A V^2+BV'^2 + CVV'$,  for some bounded smooth
 functions $A,B,C$ of $p,V',U'$.
 \end{itemize}
\end{proposizione}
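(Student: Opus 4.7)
The overall strategy is to Taylor-expand everything around the horizon $\{V=0\}$ in the adapted coordinates, exploiting the Killing equations and the explicit form (\ref{metric}) of the metric restricted to $\cH_+$. For part (a), I would first observe that, since $\cH$ is $K$-invariant, $K$ is tangent to $\cH_+$, and since $U$ parametrises the null generators of $\cH_+$, $K$ is tangent to these generators on $\cH_+$. In the adapted chart these two statements read $K^1|_{V=0}=0$ and $K^i|_{V=0}=0$ for $i=3,4$. Integrating the defining equation $\nabla^a(K_bK^b)=-2\kappa K^a$ along a null generator, together with the vanishing of $K$ on the bifurcation surface $\cB=\{U=V=0\}$ guaranteed by the Racz--Wald extension, fixes $K^2|_{V=0}=\kappa U$. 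To eliminate the first-order-in-$V$ term in $K^2$ and identify the first-order-in-$V$ coefficient of $K^1$ as $-\kappa$, I would apply the Killing identity $\nabla_aK_b+\nabla_bK_a=0$ to suitable index pairs on $\cH_+$, using the off-diagonal structure $g_{UV}|_{V=0}=-1/2$ encoded in (\ref{metric}). The claimed expressions then follow from Taylor's theorem with remainder; boundedness of $R_1,R_2,R_i$ on $\overline{\cO}_1$ is automatic since these are smooth functions on a compact set.

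For parts (b) and (c), the key geometric observation is that on $\cH_+$ the form (\ref{metric}) is degenerate along the null-generator direction $\partial_U$, with orthogonal Riemannian structure given by the $U,V$-independent metric $h$. Since $\cB$ is the fixed-point set of the isometry subgroup generated by $K$ it is totally geodesic, and the $K$-induced flow is an isometry of $\cH_+$; a $g$-geodesic between two points of $\cH\cap\cO_1$ can therefore be reduced, by sliding endpoints along null generators, to a curve in $\cB$, along which the $g$-length functional coincides with the $h$-length. Minimising gives $\sigma(p,p')=\ell(s(p),s(p'))$, proving (b). For (c), the function $p'\mapsto\sigma(p,p')$ restricted to the Riemannian cross-section $\cS_{V',U'}$ has, at $V=V'=0$ and $p\in\cH$, a critical point at $s(q)=s(p)$ by (b), and its Hessian there reduces to twice the positive-definite $h$; the implicit function theorem yields a smooth and unique $q(p,V',U')$ inside $G_\delta$ for $\delta$ small enough and $|V|,|V'|$ in a sufficiently small neighbourhood of zero, the uniform control being secured by compactness of $\overline{\cO}_1$.

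For (d), I would regard $\sigma(p,q(p,V',U'))$ as a smooth function of $V$ and $V'$ for fixed $U,U'$ and cross-sectional coordinates, and Taylor-expand to second order in $V,V'$. By (b) the value at $V=V'=0$ is $\ell(s(p),s(q))$. To compute the first-order derivatives I would exploit Synge's identity, which expresses $\nabla_\mu^p\sigma(p,q)$ as $-2$ times the covariant tangent of the $g$-geodesic from $p$ to $q$; on the horizon this tangent has vanishing $\partial_V$-component by (b), so the only contribution to $\partial_V\sigma|_{V=V'=0}$ comes from the off-diagonal $g_{UV}|_{V=0}=-1/2$ coupled to the $\dot\gamma^U$ component, producing $-(U-U')$, and symmetrically $\partial_{V'}\sigma|_{V=V'=0}=+(U-U')$. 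Summing these linear-order contributions yields exactly the cross term $-(U-U')(V-V')$, while all remaining corrections are of order $V^2$, $VV'$ or $V'^2$ with smooth, hence (by compactness of $\overline{\cO}_1$) bounded coefficients $A,B,C$. The main obstacle, and the technically most delicate step, is to verify that no residual linear-in-$V$ or linear-in-$V'$ correction containing hidden $U,U'$-dependence survives; I would control this by invoking the Hamilton--Jacobi equation $g^{\mu\nu}\nabla_\mu\sigma\nabla_\nu\sigma=4\sigma$ satisfied by the world-function, or equivalently via a parametrix-type expansion of $\sigma$ near the horizon using (\ref{metric}) and the smoothness of $q(p,V',U')$ established in (c).
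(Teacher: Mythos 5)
Your proposal follows essentially the same route as the paper's own proof: for (a) the identity $K\rest_{\cH}=\kappa U\,\partial/\partial U$ combined with the Killing equations and the form of $g$ on $\cH_+$, for (b) the invariance of $\sigma$ under the Killing flow pushed to the limit on ${\cal B}$ together with ${\cal B}$ being totally geodesic, for (c) positive-definiteness of the transverse Hessian plus an implicit-function/fixed-point argument made uniform by compactness of $\overline{\cO}_1$, and for (d) the first-order $(V,V')$-Taylor expansion of $\sigma(p,q(p,V',U'))$ at $V=V'=0$. The only noticeable difference is that you compute the first-order coefficients in (d) explicitly via Synge's identity for the world function (the paper merely asserts the expansion), which is a correct filling-in of detail rather than a different argument.
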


\begin{proof}
{\bf (a)}  $\nabla_a K_b + \nabla_bK_a =0$ and  $\nabla^a (K_bK^b)  = -2\kappa K^a$  on $\cal H$ imply that 
$\nabla_K K = \kappa K$ on $\cH$, so that
$K\spa \rest_{{\cal H}} = \kappa U \frac{\partial}{\partial U}$ because $U$ is an affine parameter and $K$ vanishes on ${\cal B}$ where $U=0$.
If $x^a$ is any of $x^1=V,x^2= U,x^3,x^4$,  exploiting the parallel transport used to define the coordinates, we have
\beq 
\Gamma^a_{21}\spa\rest_{\cal H} =  \Gamma^a_{12}\spa\rest_{\cal H} = \Gamma^a_{22}\spa\rest_{\cal H}= \Gamma^a_{11}
\spa\rest_{\cal H} = 
\Gamma^1_{2a}\spa\rest_{\cal H}
= \Gamma^1_{a 2}\spa\rest_{\cal H} 
= \Gamma^2_{a1}\spa\rest_{\cal H} = \Gamma^2_{1a}\spa\rest_{\cal H} =0 \label{idgamma}\:.
\eeq
Above, the fifth one is equivalent to $g(\frac{\partial}{\partial U}, \nabla_{\frac{\partial}{\partial U}}\frac{\partial}{\partial x^a})\spa \rest_{\cal H}=0$ and it arises from
$g(\frac{\partial}{\partial U}, \frac{\partial}{\partial x^a})= -n_{a}$   and 
$\nabla_{\frac{\partial}{\partial U}} \frac{\partial}{\partial U} =0$ on ${\cal H}$, the seventh one can be proved similarly.
Next, taking the first-order Taylor expansion in $V$ of both $K^1$ and $g^{ab}$ about $V=0$, we have, for some smooth function $V^2R_1(p)$  
 bounded in view of the compactness of $\overline{{\cal O}_1}$,
\beq K^1 = K^1\spa\rest_{\cal H} + V \left(g^{1b}\spa\rest_{\cal H} \frac{\partial K_b}{\partial V}\spa\rest_{\cal H} + 
K_b \spa\rest_{\cal H} \frac{\partial g^{1b}}{\partial V}\spa\rest_{\cal H} \right) + V^2 R_1(p)\:.\label{inter}\eeq
{From} $\frac{\partial g^{ab} }{\partial x^c } = - g^{bd} \Gamma^a_{cd} - g^{ad} \Gamma^b_{cd}$ and
and  (\ref{idgamma}), exploiting
 (\ref{met}) and $K^1=0$ on $\cal H$, the identity
 (\ref{inter}) simplifies: The last derivative
 vanishes and
$K^1 = - V \frac{\partial K_2}{\partial V}\spa\rest_{\cal H} + V^2R^1(p)$. 
Furthermore, 
$\nabla_2 K_1 + \nabla_1 K_2 =0$ evaluated on ${\cal H}$ and using (\ref{idgamma})  leads to
$\frac{\partial K_2}{\partial V}\spa\rest_{\cal H} =- \frac{\partial K_1}{\partial U}\spa\rest_{\cal H}$, so that:
$$K^1 = - V \frac{\partial K_1}{\partial U}\spa\rest_{\cal H} + V^2R_1(p) = 
 V \left(g_{1a}\spa\rest_{\cal H}  \frac{\partial K^a}{\partial U}\spa\rest_{\cal H} +
 K^a \spa\rest_{\cal H}  \frac{\partial g_{1a}}{\partial U}\spa\rest_{\cal H}  \right)+ V^2R_1(p)\:.$$
The last derivative vanishes in view of  (\ref{idgamma}), (\ref{met}) and the know identity 
$\frac{\partial g_{ab} }{\partial x^c } =  g_{bd} \Gamma^d_{ca} + g_{ad} \Gamma^d_{cb}$. Therefore the first identity in
(\ref{KV}) holds in view of (\ref{met}) and $K\spa \rest_{{\cal H}_R} = \kappa U \frac{\partial}{\partial U}$. 
The second one can be proved with the same procedure 
noticing that the Killing identity $\nabla_V K_1 = 0$, on $\cal H$,
 becomes $\frac{\partial K_1}{\partial V}\spa\rest_{\cal H}=0$
in view of (\ref{idgamma}). The last identity in (\ref{KV}) is obvious. 

{\bf  (b)} 
 Since the geodesically convex neighbourhoods form a base of the topology and the projection $\pi : p \mapsto s(p)$ is continuous,
if ${\cal O}_1$ is chosen to be sufficiently small, we have
that ${\cal O}_1$ is contained in a geodesically convex neighbourhood
 while, at the same time,  $\pi({\cal O}_1)$ is contained in a $h$-geodesically convex neighbourhood in ${\cal B}$.
 Without loss of generality, we can further assume that the latter is included in a $g$-geodesically convex neighbourhood of $M$.
Thus $\sigma(p,p')$, $\sigma(s(p)),s(p'))$ and $\ell(s(p),s(p'))$ are simultaneously well-defined if $p,p' \in {\cal O}_1\cap {\cal H}$ for a sufficiently small ${\cal O}_1$. 
We notice that, 
$\sigma(p,p')$ is invariant under the action of the  
Killing isometry.
Hence, for any $p,p' \in {\cal H}\cap {\cal O}_1$ we get the 
identity
 $\sigma(p,p')= \sigma(s(p),s(p'))$ 
 taking the limit towards $\cal B$ of the flow 
generated by the Killing field $K$ applied to $p,p'$.  Finally $\sigma(s(p),s(p')) =\ell(s(p),s(p'))$ because ${\cal B}$ is totally geodesic as it can be proved by direct inspection.

{\bf (c)}  Let $(V,U, s) \equiv p$ and $(V',U', s') \equiv
p'$. Whenever both points $p$ and $p'$ are contained on the horizon,
namely $V=V'=0$, 
the thesis holds
in view of (b) and the fact that $\ell(s,s')$ is positive
definite, with positive-definite Hessian matrix in the coordinates
$x'^3,x'^4$ of $s'$.
Furthermore, in this case
 $s(q(p,0,U')) = s(p)$.
By continuity, that Hessian matrix remains positive definite if $p,p'$
stay close to $\cH$, so that, any zero $q(p, V',U')$ of the
$x'^3,x'^4$-gradient of ${\cal S}_{V',U'}\ni p'\mapsto \sigma(p,p')$
determines a minimum of $\sigma(p,p')$.
Taking the Taylor expansion of $\nabla_{x'^i}\sigma(p,p')$ ($i=3,4$)
centred on a point in $\cH\times \cH$
with respect to all the coordinates of $p$ and $p'$,
 the equation for $q(p, V',U')$
can easily be handled by exploiting Banach's fix point theorem, proving
the existence and the uniqueness of $q(p, V',U')$ for $p \in \cO_1$
sufficiently shrunk around $\cH$, and $p'$ varying in a neighbourhood
$G_\delta(p, V',U')$
of $(0,U',s)$ in ${\cal S}_{V',U'}$. We recall that $G_\delta(p, V',U')$ is
the pre-image through ${\cal S}_{V',U'} \ni p' \mapsto s(p')$ of a
geodesic ball on ${\cal B}$ centred on
$s(p)$. The compactness of $\overline{{\cal O}_1}$ and a continuity
argument
assures that $\delta>0$ can be chosen uniformly in $p$.

{\bf (d)} Keeping $U,U',x^3,x^4$ fixed,
the expansion (\ref{expansions})  is nothing but the first-order $(V,V')$-Taylor expansion of  
$\sigma(p, q(p, V', U'))$ at $V=V'=0$, paying attention to the fact that the coordinates $x'^3,x'^4$ of $q(p, V', U')$ depend on $V$ through 
%the dependence of  $q(p, V', U')$ from 
$p$.
\end{proof}

\section{Correlations across the Killing horizon}

\ssb{General outlook} 
We wish to compute the correlation functions of a real scalar quantum field, $\Phi$,  for field observables localized in a region intersecting a Killing horizon. 
Thus, we assume that a quantum system is described by a corresponding
 $^*$-algebra $\cA$ generated by the unit $I$ and field operators $\Phi(f)$ for all $f\in C_0^\infty(M)$.
For our purposes only minimal properties of such a quantum field theory are necessary, namely that the (abstract) field operator $\Phi$ is smeared with  compactly supported smooth functions $f \in C_0^\infty(M)$
 and furthermore that: (i) the map $f \mapsto \Phi(f)$ is linear, (ii) $\Phi(f)^* = \Phi(\overline{f})$,
and (iii) $[\Phi(f),\Phi(g)]=0$ when $supp(f)$ and $supp(g)$ are causally separated. 
 {\em  We stress that do not assume that $\Phi$ satisfies any specific field equation, so that, in principle our approach may encompass interacting fields}.\\
If $\omega$ is an (algebraic) state on $\cA$, 
the {\em correlation functions} we are interested in are the bilinear functionals that map {\em real} smooth  
functions $f,f'$ 
to $\omega(\Phi(f)\Phi(f'))$. We shall specify shortly the form of the test functions $f,f'$. For the moment we only say that
their supports are taken in a open region  $\cO$ containing a Killing field $K$ and which satisfies the local general geometric hypotheses. We finally  assume that $\cO$
can be covered by {\em coordinates adapted to 
$\cH_+$} (defined at the beginning of section \ref{seccoord}) with $U$ and $V$ increasing towards the future.\\
Later  we shall restrict $\cO$ to a subregion $\cO_1$ as in (b) of proposition \ref{propin} because
 we want to use the expression (\ref{expansions}) for the geodesic distance.
The region $\cO_1$ when considered in coordinates $(V,U,s)$, can always be taken of the form $(-a,a)\times (b,c) \times S$, where $S \subset {\cal B}$ is an open relatively compact subset. 
Notice that, shrinking $\cO_1$ around the region of $\cH \cap \cO_1$ 
determined by $(b,c) \times S$ means taking $a>0$ smaller and smaller.

In the sufficiently small neighbourhood $\cO_1$ one finds $g(K,K) = \kappa^2 UV + O(V^2)$ in view of (\ref{met}),  (a) 
in proposition \ref{propin}, and $\partial_Vg_{22}|_\cH=0$ (arising from $\Gamma^2_{22}|_\cH =0$). As a consequence, $K$ turns out to be  spacelike in  ${\cal O}_s\equiv \{p \in {\cal O}_1 \:|\: V(p) >0 \}$ and timelike in  ${\cal O}_t\equiv \{p \in {\cal O}_1 \:|\: V(p) <0 \}$. 
Referring to stationary black holes, ${\cal O}_1$  can be interpreted as a sufficiently small 
region around a point on the future horizon, the only horizon existing when the black hole is produced by collapsing matter. There, ${\cal O}_s$  is part of the {\em internal} region, containing the singularity, while 
${\cal O}_t$ stays in the {\em external} region, stationary with respect to the Killing time associated to $K$. In this way, a notion of {\em energy} 
related to $K$ can be defined, in $\cO_t$ at least, and we will take 
advantage of it shortly.

  We recall that, in the GNS representation of a state $\omega$ on the $^*$-algebra of field observables, 
the expectation value of the product of two fields  $\omega\left( \Phi(f)  \Phi(f') \right)$
(with $f,f'$ real)
is equal to $\langle  \widehat{\Phi}(f) \Psi_\omega|  \widehat{\Phi}(f')   \Psi_\omega   
\rangle$, where $\Psi_\omega$ is the cyclic vector and $\widehat{\Phi}(f) $ is  the  
field operator represented as a proper operator on the GNS Hilbert space. 
Hence, up to normalization, $|\omega\left( \Phi(f)  \Phi(f') \right)|^2$  can be interpreted as a transition probability
between the states $\widehat{\Phi}(f) \Psi_\omega$ and   $\widehat{\Phi}(f') \Psi_\omega$. In this sense,
 when $f$ and $f'$ are localized on the opposite sides 
of the horizon, the regions $\cO_s$ and $\cO_t$, the correlation function $\omega\left( \Phi(f)  \Phi(f') \right)$ provides a measure of the {\em transition probability} through the horizon.\\
Inspired by the ideas proper of the {\em scaling-limit procedure} \cite{HNS,Buchholz,BV},
in order to obtain the leading order to that probability, we shall consider some sequences of smearing functions  $f_\la$ and $f'_\la$ whose support become closer and closer to the horizon $\cH$  in the limit  $\lambda \to 0^+$.
Thus we are going to compute the limit:
$$
\lim_{\la\to 0^+} \omega\left( \Phi(f_\la)  \Phi(f'_\la) \right)  
$$
where  $f_\la$ and $f'_\la$ are smooth functions supported in $\cO_1$
whose supports become closer and closer to the horizon as long as $\lambda \to 0^+$.\\
In contrast to the discussion presented in \cite{HNS}, where the scaling limit {\em towards a point} is analysed, with the proposed procedure we are instead considering scaling limits {\em towards a null surface}, so that the results presented in \cite{HNS} cannot be automatically applied to the present case.\\
Since only the short distance behaviour of the two-point function of the state
is relevant for our computation, we 
select the class of allowed states looking at their ultraviolet features.  
We assume that the two-point function  of  $\omega$ is a distribution of $\cD'(M\times M)$ 
defined as 
$$
\omega\left( \Phi(f)  \Phi(f') \right)  = \lim_{\epsilon\to 0^+} \int_{M\times M} \om_\epsilon (x,x') f(x) f'(x') dx dx'
 $$ 
where the integral kernels  $\om_\epsilon$ have the following form  
\beq
\om_\epsilon(x,x') =\frac{\Delta(x,x')^{1/2}}{4\pi^2\si_\epsilon(x,x')}+ w_\epsilon(x,x')\:, \label{ultra}
\eeq
whenever the test functions are supported  in a fixed, relatively compact, geodesically convex neighbourhood.
In the previous expression,  $\si_\epsilon(x,x')=\si(x,x')+2i\epsilon (T(x)-T(x'))+\epsilon^2$ and $T$ is
a fixed (arbitrarily chosen) time function  \cite{KW}. The smooth strictly-positive function
$\Delta$ is the so-called Van Vleck-Morette determinant \cite{Wald2,KW}.
 We finally assume that $w_\epsilon$ has a ``less singular'' behaviour with respect to 
that of $\si_\epsilon(x,x')^{-1}$ in the sense we are going to specify.
An important case is:
\beq 
w_\epsilon(x,x') = v(x,x') \ln \si_\epsilon(x,x') + w(x,x') \quad \mbox{for some   fixed smooth functions
$v$, $w$.} \label{ultra2}
\eeq 
With this form of $w_\epsilon$, the right-hand side of  (\ref{ultra}) is a straightforward generalization of the short distance 
structure of the two-point function of Minkowski vacuum for a Klein-Gordon scalar field, if one also suppose that 
the field $\Phi$ satisfies the Klein-Gordon field equation 
$\Box \Phi + V \Phi =0$ (where $V: M \to \bR$ is any fixed smooth function).
In particular,  (\ref{ultra})-(\ref{ultra2}) are fulfilled by all the quasifree states  of {\em Hadamard type} \cite{Wald2} that are defined by  (\ref{ultra})-(\ref{ultra2}) with a further requirement  
on the form of $v$ related to the Klein-Gordon equation. Those states are supposed to be the most significant states for free QFT in curved spacetime \cite{Wald2} and are very often employed in the rigorous description of thermal properties of quantum fields 
in the presence of black holes \cite{KW,FH,DMP}. However, since we do not 
 need to consider free fields or any precise field equation, we further relax the requirements, adopting (\ref{ultra})
but, in place of  (\ref{ultra2}), assuming a weaker pair of requirements  (in the following 
$p=(V, U, s)$, $p'=(V', U', s')$): 

{\bf (W1)} $w_\epsilon(p,p') \to w'(p,p') $ as $\epsilon \to 0^+$, almost everywhere in $(p,p')$, for some function $w'$
 and  $w_\epsilon$ is $\epsilon$-uniformly
bounded by a locally $M^2$-integrable function;

{\bf (W2)} $w'(V, U, s, V', U', s') \to w''(U, s, U', s')$ almost everywhere in $(U,s,U',s')$
when  $(V,V')\to (0,0)$ for some function $w''$ on $\cH^2$, and  $w'$ is $(V,V')$-uniformly
bounded by a locally $\cH^2$-integrable function.

 Notice that (W1) and (W2) are satisfied by $w_\epsilon$  whenever it satisfies (\ref{ultra2}) and thus they are valid  for Hadamard states in particular.

To conclude, few words about the precise construction of the functions $f_\lambda,f'_\lambda$ are necessary. 
Let  $f$ and $f'$ be some smooth functions with compact support contained respectively in the regions $\cO_s$ and $\cO_t$. The associated functions $f_\lambda,f'_\lambda$ are defined as 
\beq
f_\la(V,U,x^3,x^4) := \frac{1}{\la}f \at\frac{V}{\la},U,x^3,x^4\ct \;, \quad 
f'_\la(V,U,x^3,x^4) := \frac{1}{\la}f' \at\frac{V}{\la},U,x^3,x^4\ct\:,\quad  \lambda >0\:. \label{f1}
\eeq
As in other scaling-limit procedures,  the pre-factor $\la^{-1}$ is introduced in order to keep the result finite.
Finally, to avoid divergences due to zero-modes
 in the limit $\lambda \to 0^+$, 
%as those modes are  invariant under rescaling of the coordinate $V$, 
we  assume that $f,f'$ are of the form\footnote{Indeed, given $f\in C_0^\infty(\cO_1)$, 
 an $F\in C^\infty_0(\cO_1)$ with   $f = \frac{\pa F}{\pa V}$ exists if and only if 
  $\int_\bR f(V,U,x^3,x^4) dV =0$ on $\cO_1$, namely, if and only if $f(\cdot, U,x^3,x^4)$ has no zero modes referring to the $V$-Fourier transform.}:
 \beq f = \frac{\pa F}{\pa V}\;, \quad  f' = \frac{\pa F'}{\pa V}\:, \quad \mbox{for fixed $F,F' \in C_0^\infty(\cO_1)$.}\label{f2}\eeq 
Alternatively, sticking to generic smooth compactly supported functions $f$ and $f'$, the divergent contribution of zero-modes has to be subtracted at the end  of the computations.
With the assumptions \eqref{f1} and \eqref{f2},
 the $\lambda \to 0^+$ limit  of  $\omega\left( \Phi(f_\la)  \Phi(f'_\la) \right)$ is precisely the notion of {\em scaling limit of $\om\at\pa_V\Phi(x) \pa_V\Phi(y)\ct$ towards the horizon} we shall employ.
The result of such limit represents the first contribution to the sought  {\em transition probability}  in an ideal asymptotic expansion for small $\lambda$.\\

\ssb{Computation}
We are in a position to present the most important result of this paper.
The proof of the following theorem exploits techniques similar to those employed in the appendix B of \cite{KW}. However, our result differs from those presented there because we are not interested in 
studying the restriction of the states to the horizon 
in the symplectic approach, while  we intend to compute the scaling limit of the state in the $4D$ smeared formalism. 
The relevant part for our computation arises from the $(V-V')$ contribution to $\si$ in \eqref{expansions} whereas, in \cite{KW}, only the $(U-U')$ contribution plays a relevant role.
In this respect, there are some similarities to the analysis performed in \cite{FH}, although here neither we restrict ourself to the spherically symmetric case nor we consider any equation of motion for the quantum field, nor we suppose that the Killing structure extends up to the future null infinity, as done in \cite{FH}.

\begin{teorema}\label{theorem}
Assume that the general local geometric hypotheses as in Def.\ref{def1} hold 
in $\cO$ (covered by coordinates adapted to $\cH_+$), suppose that $\cO_1 \subset \cO$ 
is a sufficiently small open neighbourhood of a point on $\cH$ with $\overline{\cO_1}\subset \cO$ compact. 
Assume also that the state $\omega$ 
has two-point function given by a distribution as in (\ref{ultra})
 that verifies the requirements (W1) and (W2) above (in particular, $w_\epsilon$ may have the form (\ref{ultra2}) and may be a Hadamard state for Klein-Gordon fields).\\
 If $f,f'$ are taken as in (\ref{f1})-(\ref{f2}) and 
 $\mu$ is the measure associated to the $2$-metric  on the bifurcation surface $\cal B$, it holds
\beq
\lim_{\la\to 0^+} \omega\left(\Phi(f_\la)  \Phi(f'_\la)\right)  = \lim_{\epsilon\to0^+}  -\frac{1}{16\pi} \int_{\bR^4\times {\cal B}}  \frac{F(V,U,s) F'(V',U',s)}{(V- V'-i\epsilon)^2}  dU dV dU' dV' d\mu(s)\:.
\label{correlations}
\eeq
\end{teorema}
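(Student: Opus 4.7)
I would split $\omega_\epsilon = \omega_\epsilon^{\mathrm{sing}} + w_\epsilon$, where $\omega_\epsilon^{\mathrm{sing}} := \Delta^{1/2}/(4\pi^2\sigma_\epsilon)$, and compute the scaling limit of each piece separately. For the $w_\epsilon$--part, I would perform the change of variables $V=\la W$, $V'=\la W'$: the Jacobian $\la^2$ cancels the prefactor $\la^{-2}$ produced by $f_\la f'_\la$, leaving an integrand of the form $w_\epsilon(p_\la,p'_\la)\,f(W,U,x)\,f'(W',U',x')\sqrt{|g(p_\la)|}\sqrt{|g(p'_\la)|}$. Using (W1) and dominated convergence to pass $\epsilon\to 0^+$ under the integral (replacing $w_\epsilon$ by $w'$), and then (W2) with a second dominated--convergence argument to send $\la\to 0^+$ (replacing $w'$ by $w''(U,s,U',s')$, which is $W,W'$--independent), the identity $\int_\bR f(W,U,x)\,dW=0$ --- valid because $f=\partial_V F$ with $F\in C_0^\infty(\cO_1)$ --- annihilates the whole limit.

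The main work concerns the singular piece. After the same rescaling, and using $\Delta^{1/2}\to 1$ on $\cH$ together with $\sqrt{|g|}=\tfrac12\sqrt{\det h}$ on $\cH$, the task becomes
\[
I^{\mathrm{sing}}(\la) = \frac{1}{16\pi^2}\int \frac{f(W,U,x)\,f'(W',U',x')\sqrt{\det h(x)\det h(x')}}{\sigma_\epsilon(p_\la,p'_\la)}\,dW\,dU\,d^2x\,dW'\,dU'\,d^2x' + o(1).
\]
My key idea is to perform the transverse $x'$--integration first, exploiting Proposition \ref{propin}(c)--(d). On the ball $G_\delta(p_\la,\la W',U')$ the map $s'\mapsto\sigma(p_\la,p'_\la)$ has its unique minimum at $q=q(p_\la,\la W',U')$, and its second--order Taylor expansion around $q$ reads
\[
\sigma(p_\la,p'_\la) = \sigma(p_\la,q) + H_{ij}(x'^i-q^i)(x'^j-q^j) + O(|x'-q|^3),
\]
with $H_{ij}\to h_{ij}(s)$ as $\la\to 0^+$. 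By part (d), together with $s(q)\to s(p)$ when $V,V'\to 0$, the minimum value is $\sigma(p_\la,q) = -\la(U-U')(W-W') + o(\la)$, and the two--dimensional Gaussian integration then yields
\[
\int_{G_\delta} \frac{d^2x'}{\sigma_\epsilon(p_\la,p'_\la)} = -\pi\log\!\bigl(-\la(U-U')(W-W')+i0\bigr) + \mathcal{R}_\la(U,U',s),
\]
where $\mathcal{R}_\la$ is bounded uniformly in $\la$ and independent of $W,W'$. Because $\log(-\la(\cdots))=\log\la+\log(-(\cdots))$, both $\mathcal{R}_\la$ and the $\log\la$--piece drop out in the subsequent $W,W'$--integration against $f,f'$, since $\int f\,dW=\int f'\,dW'=0$; the tail $x'\notin G_\delta$, where $\sigma$ is bounded away from zero, is controlled by the same argument. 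I am then left with an integral against $\log(-(U-U')(W-W')+i0)$, on which $f=\partial_W F$ and $f'=\partial_{W'}F'$ allow integration by parts in $W,W'$ with no boundary terms. Combined with the distributional identity
\[
\partial_W\partial_{W'}\log\!\bigl(-(U-U')(W-W')+i0\bigr) = \frac{1}{(W-W'-i0)^2},
\]
this produces the right--hand side of (\ref{correlations}).

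The principal technical obstacles are (i) a rigorous control of the transverse Gaussian integration --- bounding the cubic Taylor remainder in Proposition \ref{propin}(d), the contribution from the tail $x'\notin G_\delta$, and the $H_{ij}$--Hessian correction, all uniformly in the small parameter $\la$ --- and (ii) a careful management of the joint limits $\epsilon\to 0^+$ and $\la\to 0^+$, whose order must be exchanged to match the form of the right--hand side of (\ref{correlations}), since the $i\epsilon$--prescription in $\sigma_\epsilon$ is coupled to the rescaling. These will be the natural analogues, near a null hypersurface, of the techniques developed in Appendix B of \cite{KW} for the scaling limit towards a point.
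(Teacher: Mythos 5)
Your plan follows essentially the same route as the paper's proof: the same splitting into the $\Delta^{1/2}/(4\pi^2\sigma_\epsilon)$ part and $w_\epsilon$, the same use of (W1)--(W2) plus the zero-mode property $\int f\,dV=0$ to kill $w_\epsilon$, the $\log\lambda$ term and the bounded remainders, the same localization onto the transverse minimum $q(p,V',U')$ with the tail controlled by $\sigma$ being bounded away from zero, and the same final step of inserting $\sigma(p,q)=-(U-U')(V-V')+R$ from Proposition \ref{propin}(d) and integrating by parts in $V,V'$ to produce $(V-V'-i\epsilon)^{-2}$. The only cosmetic difference is that the paper evaluates the transverse integral by passing to geodesic polar coordinates $(\rho,\theta)$ centred on $q(p,V',U')$ and integrating by parts in $\rho$, the whole contribution coming from the $\rho=0$ boundary term --- which is exactly your (logarithmic, rather than Gaussian) two-dimensional integration in disguise --- and it handles the branch/sign bookkeeping you defer to your obstacle (ii) via the explicit decomposition into characteristic functions in (\ref{agg}) before reinstating the $i\epsilon$-prescription.
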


\begin{proof}
We start by focusing on the contribution of the most singular part of the two-point function (\ref{ultra}), 
that is the iterated limit:
\beq
L:=  \lim_{\la\to 0^+ } \lim_{\epsilon\to0^+}  \int_{\overline{\cO_1}\times \overline{\cO_1}}   
\frac{\Delta^{1/2}(p,p') f_\la(p) f'_\la(p') }{4\pi^2 \si_\epsilon(p,p')} \; dpdp'\:, \label{intint}
\eeq
where $dp$ is a short-cut 
for the measure induced by the metric.
Fixing $\delta>0$ and $p,V',U'$,  consider a neighbourhood $G_\delta(p, V',U') \subset {\cal S}_{V',U'}$ as in (c) of proposition \ref{propin}, and
define a smooth map ${\cal B} \ni s \mapsto \chi_\delta(s,V',U',p) \geq 0$ with support completely included in $G_\delta(p, V',U')$
and $\chi_\delta(s',V',U',p)=1$ for $0 \leq \sqrt{\lambda(s(p), s')} \leq \frac{\delta}{2} +\frac{1}{2} \sqrt{\lambda(s(p), s(q(p,V',U')))}$. In view of the smoothness of all considered functions  it is possible to arrange these 
functions in order that 
$(s',V',U',p) \to \chi_\delta(s',V',U',p)$ is jointly smooth. Finally we can decompose the integral in (\ref{intint}) as:
\begin{gather} \int_{\overline{\cO_1}} dp \left( \int_{\overline{\cO_1}} \frac{\Delta^{1/2}(p,p') f_\la(p) f'_\la(p') }{4\pi^2 \si_\epsilon(p,p')} \chi_\delta(s',V',U',p) \: \sqrt{|\det g(p')|} dV'dU' ds' \right) \nonumber\\
+ \int_{\overline{\cO_1}} dp \left( \int_{\overline{\cO_1}} \frac{\Delta^{1/2}(p,p') f_\la(p) f'_\la(p') }{4\pi^2 \si_\epsilon(p,p')} (1-\chi_\delta(s',V',U',p)) \: \sqrt{|\det g(p')|} dV'dU' ds' \right) \label{intint2}\:,
\end{gather}
where $ds' = dx'^3 dx'^4$. 
Let us start from the second integral.
As consequence of the compactness of $\overline{\cO_1}$, the continuity of $\sigma$ and (b) in proposition \ref{propin}, 
we have 
 that, 
 for a fixed $\eta >0$ and for $p,p'$ in a sufficiently small $\cO_1$,
 $\sqrt{\sigma(p,p')} \geq \eta/2$ if 
$\sqrt{\ell(s(p),s(p'))} > \eta$ .
We stress that the limit in $\lambda \to 0^+$ in (\ref{intint}) allows us to take $\cO_1$ as
small as we need.
Thus, by definition of $\chi_\de$ and of $G_\delta(p, V',U')$, the denominator
$\si_{\epsilon=0}(p,p') \geq \delta^2/16$ when $1-\chi_\delta \neq 0$.
Hence,
the integrand of the second integral in 
(\ref{intint2}) is  jointly smooth in all variables  including $\epsilon$, even  for $\epsilon =0$. 
Then, in view of Lebesgue's dominated convergence theorem, the limit in $\epsilon$ can be computed simply taking $\epsilon=0$ in the integrand.
For the same reason, after changing the integration variables $(V,V')$ to $(\la V,\la V')$, 
the subsequent limit as $\lambda \to 0^+$ can be 
computed under the sign of integration.
The resulting integral vanish because, in the $(V,V')$ variables, it is noting but the integral of the $V,V'$ derivative of some compactly supported smooth function.
Thus only the former integral in (\ref{intint2}) may survive the limits in (\ref{intint}). Let us focus on that integral.
Making use of (c) in proposition \ref{propin}, in each set $G_\delta(p,V',U') \subset {\cal S}_{V',U'}$ we define the function $\rho(p') \geq 0$ such that:
$$
\sigma(p, p') = \rho(p')^2 + \sigma(p, q(p,V',U'))\:.
$$
In view of (c) in proposition \ref{propin}, the pair $\rho,\theta$, where $\theta\in (-\pi,\pi)$ is the standard polar angle in geodesic polar coordinates centred
 on $q(p, V',U')$, determines
 an allowable local chart for any $p' \in G_\delta(p,V',U')$ (see also the appendix B of \cite{KW}), that is smooth barring the usual conical singularity for $\rho=0$.
 Notice that, due to the last statement in (c) of proposition \ref{propin}, when $p\in \cH$ and $V'=0$, $\rho$ coincides with the standard geodesic radial coordinate centred on $s(p)\in {\cal B}$.
 In the following we shall employ that coordinate system in each $G_\delta(p,V',U')$. Making finally use of (d) in proposition \ref{propin} and choosing $T=(U+V)/2$, we can re-arrange the former
  integral in (\ref{intint2}) so that:
\beq
L  =  \lim_{\la\to 0 } \lim_{\epsilon\to0^+}  \int  \sp 
\frac{\Delta'^{1/2}(p,p')  f_\la(p) f'_\la(p') }{\rho^2 -(V-V'-i\epsilon)(U-U'-i\epsilon)  + R(p,V',U')}  \; \frac{dp' dp}{4\pi^2}\;,\label{uno}
\eeq
where, $\Delta'^{1/2}(p,p')  := \Delta^{1/2}(p,p') \chi_\delta(p,p')$. 
{From} now on we shall assume that the integral in $p'$ is performed before that in $p$. 
Using this coordinate system the integral in the right-hand side of (\ref{uno}) can be rewritten\footnote{We henceforth assume to cut the complex plane along the negative real axis to define the 
function $\ln z$.} as
\begin{gather*}
\int  \sp  \Delta'^{1/2}(p,p') 
f_\la(p) f'_\la(p')   
\frac{\pa}{\pa \rho} 
\ln \at \rho^2 + \sigma(p, q(p,V',U'))\ct 
\frac{\sqrt{|\det g|}}{8\pi^2 \rho}{d \rho d\theta} dU'dV'  dp\nonumber
\end{gather*}
where $\det g$ is the determinant of the metric in the coordinates $\rho, \theta, V',U'$, which parametrically depends on $p$. Notice that the domain of integration in $\rho$
is bounded by the support of the function $\chi_\delta(p,p')$ embodied in $\Delta'$.
For $V=V'=0$, the metric takes the form (\ref{met}) on $\cal B$ which does  not depend on $U,U',V,V'$ any more while $R$ vanishes. 
 By direct inspection one sees that $ \frac{\sqrt{|\det g|}}{\rho}$ is continuous,  
 tends to $1/2$ when $p\to p'$ in $\cH$ and its $\rho$-derivative
  is continuous for $\rho\neq 0$ it being however 
bounded there.
If  $\Delta'_\lambda$,
$R_\la$, $\det g_\lambda$, $dp_\lambda$ are respectively defined as $\Delta'$, $R$, $\det g$ and $dp$ with $V$ and $V'$ rescaled by $\la$, 
changing coordinates $(V,V')\to (\la V,\la V')$ the integral in the right-hand side of (\ref{uno}) can be rearranged as
\begin{gather*} \int \sp \pa_VF(p) \pa_{V'} F'(p') \Delta'^{1/2}_\lambda(p,p')
\frac{\pa}{\pa \rho} \ln\spa \left[\rho^2  - (\la V-\la V'-i\epsilon)(U-U'-i\epsilon) + R_\la(p,V',U')\right] \nonumber\\
 \frac{\sqrt{|\det g_\lambda|}}{8\pi^2\rho} d\rho d\theta dU'dV'  dp_\lambda\;.
\end{gather*}
 Leaving unchanged the remaining integrations, we can first integrate by parts in the polar coordinate $\rho$. We obtain two boundary terms, which are integrals in the 
 remaining variables evaluated respectively  at $\rho=0$ and $\rho=\rho_0$ sufficiently large,
  and an integral in all the variable including $\rho$.
 When taking the limits as $\epsilon \to 0^+$ and $\lambda \to 0^+$, concerning the integral representing the boundary term at $\rho= \rho_0$, we can pass both limits under the sign of the integration by straightforward application of Lebesgue's 
 dominated convergence theorem.
 The result is that, in the integrand, only $\pa_V F$ and $\pa_{V'} F'$ depend on  $V,V'$, hence, performing the integrations 
in $V$ and $V'$ both integrals vanish because $F$ and
 $F'$ are of compact support.
The remaining boundary term leads  to the limit:
\begin{gather*}
L=  \lim_{\la\to 0^+} \lim_{\epsilon\to0^+}  -2\pi \int   \Delta'^{1/2}(\lambda V,U,s,\lambda V',U',s)
\pa_VF(V,U,s) \pa_{V'}F'(V',U',s)
\\
 \at   
\ln \left( -(V- V'-i\epsilon)(U-U'-i\lambda\epsilon) + \frac{R_\lambda(p,V',U')}{\la}  \right) + \ln \la
\ct
 \left.\frac{\sqrt{|\det g_\lambda|}}{8\pi^2\rho}\right|_{\rho=0} dU'dV'  dp_\lambda\;.
\end{gather*}
Above, the factor $2\pi$ arises by the integration in $\theta$ at $\rho=0$,
 and we have safely replaced $\epsilon$ with $\la\epsilon$ in the integral in view of the order of the limits.
Notice that, thanks to (d) of proposition \ref{propin}, 
$|\la^{-1} R_\lambda(p,V',U')| < C \la$ for some constant $C$, when $p,p' \in \cO_1$ and $\lambda \in [0, \lambda_0)$.
Furthermore, from the (\ref{met}) of the metric on $\cH$
we get that 
$dp_\lambda = \frac{1}{2}(1+ \lambda V z) dUdV d\mu(s)$ for some smooth function $z=z(V,U,s)$,
where $d\mu$ is the measure associated to the $2$-metric $h$ on ${\cal B}$. 
Thus the term proportional to $\log(\la)$
can be dropped as it  gives no contribution to the final result 
because $g_0$, $\Delta_0$ do not depend on $V$ and the apparently divergent term
results once more in the integral of a derivative of a compactly supported smooth function. 
The limits of the remaining terms can then be computed, in the given order, 
exploiting Lebesgue's theorem. We eventually obtain:
\begin{gather}
L= - \int   
\pa_VF \pa_{V'}F' \left(i\pi \chi_{E_+}\chi_{A_+} - i\pi \chi_{E_+}\chi_{A_-}    +
\ln |V-V'||U-U'|\right) \frac{dU dV dU' dV' d\mu(s)}{16 \pi}
\;,\label{agg}
\end{gather}
where $E_{\pm}$ is the subset of $\cO_1$ with, respectively, $(V-V')(U-U')\gtrless 0$ and $A_\pm$ is the analogue with, respectively,
$U-U' \gtrless 0$, and $\chi_S$ is the characteristic function of the set $S$. 
We have also used the fact that $dp_\lambda$ becomes $\frac{1}{2}dUdV d\mu(s)$ for $\lambda=0$ in view of the form (\ref{met}) of the metric on $\cH$
and that $\Delta'=\Delta =1$ when $s=s'$ and $V=V'=0$ (it follows from $\Delta(p,p)=1$ and, since $\Delta$ is invariant under isometries, using an argument
 similar to that employed to prove
(b) of proposition \ref{propin}). For the same reason, in view of the meaning $\rho$, 
 $\frac{\sqrt{\det|g_\lambda|}}{\rho} \to 1/2$ for  $\rho \to 0$ when $\lambda =0$, when working in coordinates $\rho,\theta, V,U$.
 The integral in (\ref{agg}) can equivalently be re-written introducing another $\epsilon$-prescription as:
\begin{gather*}
 L= \lim_{\epsilon\to0^+}  -\frac{1}{16\pi} \int   
\pa_VF \pa_{V'}F'
\ln( -(V- V'-i\epsilon)(U-U') ) 
 dU dV dU' dV' d\mu(s)\;.
\end{gather*}
Eventually, integrating by parts, we obtain \eqref{correlations} concluding the proof, provided the contribution of the term $w_\epsilon$ in (\ref{ultra}) to the left-hand side of \eqref{correlations} vanishes.
Indeed this is the case. Taking the requirements (W1) and (W2) into account, applying Lebesgue's dominated convergence theorem, changing variables $V \to \lambda V\:, V' \to \lambda V'$ and exploiting Lebesgue's theorem  again,  we have:
$$ \lim_{\la\to 0^+ } \lim_{\epsilon\to0^+}  \int 
 f_\la(p) f'_\la(p') w_\epsilon(p,p') \; dpdp' =
 \int w''(U, s, U',s') \frac{\partial F}{\partial V}  \frac{\partial F'}{\partial V'} dUdV d\mu(s) 
 dU'dV' d\mu(s') =0
 $$
 in view of the fact that the result of the integrations in $V$ and $V'$ vanishes because $F$ and $F'$ are smooth with compact support.
\end{proof}

\ssb{The correlation functions and their  thermal spectrum}
As is known (e.g., see \cite{Wald}) a timelike Killing vector field on the one hand  provides   a natural notion of time, which is nothing but the parameter of the integral lines of the field. On the other hand
 it gives
a natural notion of conserved energy for fields and  matter propagating in the region where the Killing vector is present. 
We are interested in computing the energy spectrum of the correlation functions $\omega(\Phi(f_\la)\Phi(f'_\la))$ 
%(in the limit $\lambda \to 0^+$)
seen by an observer that moves along the curves generated by the Killing field $K$
 and computed with respect to the associated Killing time.   
 More precisely, exploiting Theorem \ref{theorem}, we intend to compute that energy spectrum in the limit of test functions squeezed on the local Killing  horizon. 
 As the supports of the test functions are infinitesimally close to the horizon,
 we have to focus on what happens for $V\sim 0$. Therefore  we truncate every component of  the formula (\ref{KV}) for the Killing vector at the dominant order in powers of $V$
 and we make use of the right-hand side of (\ref{correlations}) as definition of correlation two-point function.
Now $\tau$ is the Killing time, namely the integral parameter of the curves tangent to $K$. In the said approximation,  the first 
identity in the right-hand side of (\ref{KV}) implies:
\beq 
V(\tau) =  -e^{-\ka \tau} \quad \mbox{for $V<0$ (that is in $\cO_t$) and}\quad V(\tau) = e^{-\kappa \tau}\quad \mbox{for $V>0$ (that is in $\cO_s$),} \label{casiV}
\eeq
up to an additive constant in the definition of $\tau$ which in principle could  depend on the integral curve. 
Our choice is coherent with the standard definitions of $\tau$ in
 Schwarzschild spacetime where $\tau$ is the Killing time in
the external region. Indeed we recover those cases in the limit
 where the $\tau$-constant $2$-surfaces are close to the Killing horizon.
We now examine two situations.\\

{\bf (a)} {\em Both the supports of $f_\la$ and $f'_\la$ stay in $\cO_t$}.\\
In that case, thinking of the functions $F,F'$ as functions of $\tau,\tau'$ instead of $V,V'$ in view of (\ref{casiV}), we can re-arrange the found expression for the correlation function as
\beq
\lim_{\la\to 0 }   \omega(\Phi(f_\la)\Phi(f'_\la)) =
\lim_{\epsilon\to 0^+} - \frac{\kappa^2}{64\pi}   \int \frac{F(\tau,U,x) F'(\tau',U',x)}{(\sinh(\frac{\ka}{2} (\tau-\tau'))+i\epsilon)^2}  d\tau dU d\tau' dU' d\mu(x) \;,
\eeq
where we have used the fact that
the functions $F$ and $F'$ are compactly supported by construction even adopting the new coordinate frame.
It is known that, in the sense of  the Fourier transform of the distributions, 
$\int_\bR \frac{d\tau}{\sqrt{2\pi}}  \frac{e^{-iE \tau}}{(\sinh(\frac{\tau}{2})+i0^+)^2} = -\sqrt{2\pi}\frac{E e^{\pi E}}{e^{\pi E} - e^{-\pi E}}$
(e.g., see the appendix of \cite{DMP}). 
That identity and the convolution 
theorem lead to
\beq
\lim_{\la\to 0 }  \omega(\Phi(f_\la)\Phi(f'_\la)) = \frac{1}{32} \int_{\bR^2\times {\cal B}} \left(\int_{-\infty}^\infty 
 \frac{\overline{\hat{F}(E,U,x)}  \hat{F'}(E,U',x) }{1-e^{-\beta_H E}} E dE \right) dU dU' 
d\mu(x) \;,
\eeq
where $\beta_H={2\pi}/{\ka}$ is the {\em inverse Hawking temperature} and we have defined the smooth function:
$$  \hat{F}(E,U,x) := \int_\bR \frac{d\tau}{\sqrt{2\pi}} e^{-iE \tau} F(\tau,U,x)\:,$$
that, for $E \to \pm \infty$, vanishes faster than $E^{-n}$, $n=0,1,2,\ldots$ uniformly in the remaining coordinates. 
The thermal content of the found correlation function is manifest in view of the Bose factor $({1-e^{-\beta_H E}})^{-1}$ where the Hawking temperature $1/\beta_H$ takes place
 - fixed with respect to a (generally arbitrary) choice of the scale  necessary to define the Killing field $K$ (see the remark below).\\

{\bf (b)} {\em The support of $f_\la$ stays in $\cO_s$, while that of $f'_\la$ stays in $\cO_t$: Tunnelling processes}.\\
As previously remarked, up to normalization, $|\omega(\Phi(f_\la)\Phi(f'_\la))|^2$ can be interpreted as a tunnelling probability through the horizon.
Employing (\ref{casiV}) once more, we end up with:
\beq
\lim_{\la\to 0 }   \omega(\Phi(f_\la)\Phi(f'_\la)) =
\lim_{\epsilon\to0^+} \frac{\kappa^2}{64\pi}   \int  \frac{F(\tau,U,x) F'(\tau',U',x')}{\cosh(\frac{\ka}{2} (\tau- \tau')+ i\epsilon)^2}  d\tau dU d\tau' dU' d\mu(x)\;.
\eeq
As expected from the fact that, in this case, the support of $f_\la$ is always disjoint from the support of $f'_\la$, we can directly pass the limit $\epsilon\to0^+$ 
under the sign of integration simply dropping $i\epsilon$ in the denominator. Taking advantage of the convolution theorem, the final result reads: 
\beq
\lim_{\la\to 0 }   \omega(\Phi(f_\la)\Phi(f'_\la)) = \frac{1}{16} \int_{\bR^2\times {\cal B}} \left(\int_{-\infty}^\infty 
 \frac{\overline{\hat{F}(E,U,x)}  \hat{F'}(E,U',x)}{\sinh(\beta_H E/2)}  EdE \right) dU dU' 
d\mu(x) \:.\label{tunnel} 
\eeq
Notice that, if the arbitrary additive constant defining $\tau$ in $\cO_t$ were different from that in $\cO_s$, then a further exponential $\exp\left(i c E\right)$ would take place in the numerator for some 
real constant $c$.
In any case, the energy spectrum does not agree with the Bose law. However, considering packets concentrated 
around to a high value of the energy $E_0$, (\ref{tunnel}) leads to the estimate for the tunnelling probability:
$$
\lim_{\la\to 0}|\omega(\Phi(f_\la)\Phi(f'_\la))|^2 \sim \mbox{const.}\:  E_0^2 \: e^{-\beta_H E_0}\:,
$$
in agreement with the ideas in \cite{PW,fisici}. It is nevertheless worth remarking that the interpretation of $E$ as an energy is questionable for the packet in the internal region 
$\cO_s$ since
the Killing vector $K$ is spacelike therein.\\

\noindent {\bf Remark}\footnote{The authors are grateful to one of the referees for this important remark.}. It is worth stressing that if $K$ is a (future-directed) Killing field, for every constant $c>0$, $c K$ is a (future-directed) Killing field again. In general, there is no way to fix the scale $c$.  
This arbitrariness in the definition of $K$ enters the definition 
$\nabla^a (K_bK^b)  = -2\kappa K^a$ of the associated surface gravity $\kappa$
and, in turn, it affects the definition of the Hawking temperature $T_H = \kappa/(2\pi)$.
 In the case 
of an asymptotically flat (say, Schwarzschild) black hole, one uses the 
asymptotic behaviour of the Killing isometries to normalize the 
``Schwarzschild time'' parameter to correspond to ``ordinary 
time" at infinity (and of course energy is also defined with 
respect to infinity), fixing the scale $c$.  However, if one is assuming only the 
existence of a local horizon and not necessarily any connection 
with any ``infinity'' region, in the absence of  further 
physical requirement,
there is 
nothing to set the scale. So, while it may be correct to say 
that one has a thermal spectrum in the found expressions for 
$\lim_{\la\to 0}\omega(\Phi(f_\la)\Phi(f'_\la))$, 
it does not seem very  meaningful to claim that this corresponds to any particular 
value for the Hawking temperature in the general case.

\section{Conclusions}
In this paper we have analysed the correlation functions and the tunnelling amplitude through a Killing horizon for quantum  states in a certain, physically relevant, class of states that includes the states  of  Hadamard type. Although the equation of motion does not play a relevant role, the results are in particular valid
 for scalar Klein-Gordon particles.
The computation is performed in the limit of test functions squeezed on the Killing horizon.
The considered local Killing horizon with positive constant surface gravity may be a part of the complete horizon of a black hole, including non-static black holes 
as the non-extremal charged rotating one,
 or it may just temporarily 
exist in a finite region.
The considered states are generally not required to be invariant with respect to the isometry group generated by the Killing field.
We have established  that, in the limit of test functions sharply localized on the opposite sides of the horizon, the correlation functions have
a thermal nature, namely they have a spectrum which decays exponentially as $\exp\{-\beta_{\scriptsize \mbox{Hawking}} E\}$ for high energies. 
The energy $E$ and the inverse Hawking temperature 
 are defined with respect to the Killing
 field generating the horizon (see however the remark at the end of the previous section). 
 This achievement  is in agreement with the result obtained in other recent papers, although here it is obtained in 
 the framework of the rigorous formulation of quantum field theory on curved spacetime. While, 
 in the mentioned literature quantum-mechanical approaches are exploited, leaving unresolved
the issue concerning the strongly ambiguous notion of quantum particle in curved spacetime.

 Furthermore,  we have also established that, when both test functions  are  localized in the external side of the horizon, a full Bose spectrum 
 at the Hawking temperature arises  in the expression of the correlation functions.
 %, when taking  the considered limit. 
 In both
 cases the computation is completely local, i.e. the nature of the  geometry at infinity does not matter and the results do not depend on 
 the employed  states provided they belong to the mentioned class. 
These results give a strong support to the idea that the Hawking radiation, 
 that it is usually presented as a radiation detected at future (lightlike) infinity and needs the global structure  of a black hole Killing horizon, 
 can also be described as a local phenomenon for local geometric structures (local Killing horizons) existing just ``for a while''.  
 A fundamental ingredient in our computation is the fact that 
  the nonvanishing  surface gravity is constant on the Killing horizon. Thus, we could exploit the result in \cite{RW1} and, in turn,
 some technical constructions of \cite{KW}.
 Even if this requirement can  easily be 
 physically interpreted as the geometrical description of the thermodynamic equilibrium, it would be interesting to consider, from our viewpoint, the case  of a black hole 
  in formation, where there are no Killing horizons at all.  
 The latter situation has already been investigated, at least in the presence of spherical symmetry, as   in 
 \cite{fisici2,fisici3}. In those papers the WKB approach as well as the theory of Kodama-Hayward and the associated notion of dynamical horizon are exploited. 
 As a preliminary comment, we notice that the computation of the scaling limit towards the horizon does not seem to require the presence of a proper Killing horizon, which could be replaced by some more generic null hypersurface.

\section*{Acknowledgements.} 
We would like to thank Luciano Vanzo and Sergio Zerbini for useful discussions, suggestions and comments on the subject of this paper. We are also grateful to the referees of CMP for helpful remarks and suggestions.
%The financial support by the German DFG Research Program SFB 676 is kindly acknowledged.
The work of N.P. is supported in part by the ERC Advanced Grant 227458 OACFT ``Operator Algebras and Conformal Field Theory", and in part 
by a grant from GNFM-INdAM under the project
``Stati quantistici di Hadamard e radiazione di Hawking da buchi neri rotanti''.

\end{document}